\theoremstyle{plain}
\newtheorem{theorem}{Theorem}[section]
\newtheorem{proposition}[theorem]{Proposition}
\newtheorem{lemma}[theorem]{Lemma}
\newtheorem{claim}[theorem]{Claim}
\theoremstyle{definition}
\newtheorem{definition}[theorem]{Definition}
\newtheorem{observation}[theorem]{Observation}
\theoremstyle{remark}
\newtheorem{remark}[theorem]{Remark}
\declaretheorem[name=Corollary]{cor}
\newcommand{\sts}{\ensuremath{\mathsf{STS}}\xspace}
\newcommand{\cmsgen}{\ensuremath{\mathsf{CMSGen}}\xspace}
\newcommand{\SubVsSubMain}{\ensuremath{\mathsf{DistEstimate}}\xspace}
\newcommand{\SubVsSub}{\ensuremath{\mathsf{DistEstimateCore}}\xspace}
\newcommand{\SubToAeval}{\ensuremath{\mathsf{SubToEval}}\xspace}
\newcommand{\samp}{\ensuremath{\mathsf{SAMP}}\xspace}
\newcommand{\cond}{\ensuremath{\mathsf{COND}}\xspace}
\newcommand{\condmar}{\ensuremath{\mathsf{CM}}\xspace}
\newcommand{\subcond}{\ensuremath{\mathsf{SUBCOND}}\xspace}
\newcommand{\expect}{\mathbb{E}}
\newcommand{\variance}{\mathbb{V}}
\newcommand{\nb}{\mathsf{NB}}
\newcommand{\ba}{\{0,1\}}
\newcommand{\bn}{\{0,1\}^n}
\begin{document}

\twocolumn[

\aistatstitle{Distance Estimation for High-Dimensional Discrete Distributions}

\aistatsauthor{Gunjan Kumar \textcircled{r} \And Kuldeep S. Meel \textcircled{r} \And Yash Pote}

\aistatsaddress{ IIT-Kanpur \And  Georgia Institute of Technology \And National University of Singapore\\ CREATE } ]

\begin{abstract}
Given two distributions $\mathcal{P}$ and $\mathcal{Q}$ over a high-dimensional domain $\{0,1\}^n$, and a parameter $\varepsilon$, the goal of distance estimation is to determine the statistical distance between $\mathcal{P}$ and $\mathcal{Q}$, up to an additive tolerance $\pm \varepsilon$. Since exponential lower bounds (in $n$) are known for the problem in the standard sampling model, research has focused on richer query models where one can draw conditional samples. This paper presents the first polynomial query distance estimator in the conditional sampling model ($\mathsf{COND}$). 

We base our algorithm on the relatively weaker \textit{subcube conditional} sampling ($\mathsf{SUBCOND}$) oracle, which draws samples from the distribution conditioned on some of the dimensions. $\mathsf{SUBCOND}$ is a promising model for widespread practical use because it captures the natural behavior of discrete samplers. Our algorithm makes $\tilde{\mathcal{O}}(n^3/\varepsilon^5)$ queries to $\mathsf{SUBCOND}$.
\end{abstract}

\section{INTRODUCTION}\label{sec:intro}

Given two discrete distributions $\mathcal{P}$ and $\mathcal{Q}$ over $\{0,1\}^n$, the total variation (TV) distance between $\mathcal{P}$ and $\mathcal{Q}$, denoted by $d_{TV}(\mathcal{P},\mathcal{Q})$, is defined as:
\begin{align*}
    d_{TV}(\mathcal{P},\mathcal{Q}) = \frac{1}{2}\sum_{\sigma \in \{0,1\}^n} |\mathcal{P}(\sigma) - \mathcal{Q}(\sigma)|
\end{align*}
In this paper, we are interested in the computation of $(\varepsilon,\delta)$-approximation of $d_{TV}(\mathcal{P},\mathcal{Q})$: i.e., we would like to compute an estimate $\mathtt{est}$ such that $\Pr[ d_{TV}(\mathcal{P},\mathcal{Q}) -\varepsilon \leq \mathtt{est} \leq d_{TV}(\mathcal{P},\mathcal{Q}) + \varepsilon] \geq 1-\delta$.
TV distance is a fundamental notion in probability and finds applications in diverse domains of computer science such as generative models~\citep{GPMX+,JKHZ23},  MCMC algorithms~\citep{ADDJ+03,BDX04,BGJ11}, and probabilistic programming~\citep{ABH21,PM22}. 

Theoretical investigations into the problem of TV distance computation have revealed the intractability of exact computation: In particular, the problem is \#P-hard even when $\mathcal{P}$ and $\mathcal{Q}$ are represented as product distributions~\citep{BGMM+23}. As a consequence, the focus has been on designing approximation techniques. Randomized polynomial-time approximation schemes are known for some classes of distributions when $\mathcal{P}$ and $\mathcal{Q}$ are specified explicitly. An example is Bayesian networks with bounded treewidth~\citep{BGMM+23b}. Not every practical application allows explicit representation of probability distributions, and often, the output of some underlying process defines probability distributions. Accordingly, the field of distribution testing is concerned with the design of algorithmic techniques for different models of access to the underlying processes. Furthermore, in addition to the classical notion of time complexity, we are also concerned with the {\em query complexity}: how many queries do we make to a given access model?   

The earliest investigations focused on the classical model of access where one is only allowed to access samples from $\mathcal{P}$ and $\mathcal{Q}$ \citep{PL08,VV11a}; however, a lower bound of  $\Omega(2^n/n)$~\citep{VV10,VV11a}  restricts the applicability of these estimators in practical scenarios. This motivates the need to focus on more powerful models. In this work, we will focus on the {\subcond} access model owing to its ability to capture the behavior of probabilistic processes in diverse settings~\citep{JVV86,CMN99,ZCLH18}. For example,  \subcond access perfectly models autoregressive sampling as employed in state-of-the-art LLMs and image models~\citep{OKK16, KESO16}.

Formally, the {\subcond} oracle for a distribution $\mathcal{P}$ takes in a query string $\rho \in \{0,1,*\}^n$,  constructs the conditioning set $S_{\rho} = \{\sigma \in \bn |(\rho_i = *)\; \vee (\rho_i = \sigma_i)\}$ and  returns $\sigma \in S_\rho$ with probability $\frac{\mathcal{P}(\sigma)}{\sum_{ \pi \in S_{\rho} } \mathcal{P}(\pi)}$. It is worth remarking that while we use the name {\subcond} to be consistent with recent literature ~\citep{BC18}, there have been algorithmic frameworks since the late 1980s that have relied on the underlying query model~\citep{JVV86}.

The starting point of our investigation is the observation that, on the one hand, practical applications of distance estimation rely on heuristic methods and hence don't provide any guarantees. On the other hand, no known algorithm, even when given access to the {\subcond} oracle, makes less than $O(2^n/n)$ queries. The primary contribution of our work is to address the mentioned gap: we design the first algorithm that computes $(\varepsilon,\delta)$-approximation of TV distance and makes only polynomially many queries to {\subcond} oracle. Formally, 
  
\begin{restatable}{theorem}{scondvsscond}\label{thm:scondvsscond}
Given two distributions $\mathcal{P}$ and $\mathcal{Q}$ over $\bn$, along with parameters $\varepsilon \in(0,1)$, and $\delta \in (0,1)$, the algorithm $\SubVsSubMain(\mathcal{P},\mathcal{Q},\varepsilon,\delta)$  returns estimate $\kappa$ such that $$ \Pr[  \kappa \in (d_{TV}(\mathcal{P},\mathcal{Q}) \pm \varepsilon)  ] \geq 1-\delta$$
\SubVsSubMain makes $\tilde{\mathcal{O}}\left(n^3\log(1/\delta)/\varepsilon^4 \right)$ queries to the {\subcond} oracle.
 \end{restatable}

We now provide a high-level overview of {\SubVsSubMain}: From the fact that,
\begin{align*}
    d_{TV}(\mathcal{P},\mathcal{Q}) =  \underset{\sigma \sim \mathcal{Q}}{\mathbb{E}}\left[ \max \left(1-\frac{\mathcal{P}(\sigma)}{\mathcal{Q}(\sigma)},0\right)\right]
\end{align*} 
we can use the standard approach of sampling $\sigma$ from $\mathcal{Q}$, estimating $\mathcal{P}(\sigma)$ and $\mathcal{Q}(\sigma)$ up to some multiplicative factor, and then setting the value of the random variable to be $\max(1-\mathcal{P}(\sigma)/\mathcal{Q}(\sigma),0)$. This approach requires a constant number of samples from $\mathcal{Q}$ to compute an approximation of $d_{TV}(\mathcal{P},\mathcal{Q})$. The main issue is that it is not possible to approximate the value of $\mathcal{Q}(\sigma)$ for arbitrary $\sigma$ with only polynomially many queries to $\subcond$ since $\mathcal{Q}(\sigma)$ can be arbitrarily small and the query complexity scales inversely with $\mathcal{Q}(\sigma)$. The key technical contribution lies in showing that using polynomially many {\subcond} oracle calls, we can still compute estimates for $\mathcal{P}(\sigma)$ and $\mathcal{Q}(\sigma)$ at sufficiently many points to find a theoretically guaranteed estimate.

We are interested in designing distance estimation techniques for the {\subcond} model because it effectively captures the behavior of probabilistic processes in practice. Towards this goal, we compute the precise number of queries one would need to the test, and we find that  {\SubVsSubMain} offers a $10^7$ factor speedup on problems of dimensionality $n=70$, for which the baseline sample-based estimator would require $\simeq 10^{18}$ queries -- a prohibitively large number. The result is presented in the Figure~\ref{fig:valiant}.
Therefore, we demonstrate the application of {\SubVsSubMain} in a real-world setting. Sampling from discrete domains such as $\{0,1\}^n$ under combinatorial constraints is a challenging problem; therefore, several heuristic-based samplers have been proposed over the years. We can view a sampler as a probabilistic process, and consequently, one is interested in measuring how far the distribution of a given sampler is from the ideal distribution. Our experiments focus on combinatorial samplers, and \subcond is particularly well suited for this problem. We use a prototype of {\SubVsSubMain} to evaluate the quality of two samplers for different benchmarks. Our empirical evaluation demonstrates the promise of scalability: in particular, {\SubVsSubMain} offers a $10^7$ factor speedup on problems of dimensionality $n=70$.

\begin{figure}
    \centering
    \includegraphics[scale=0.33]{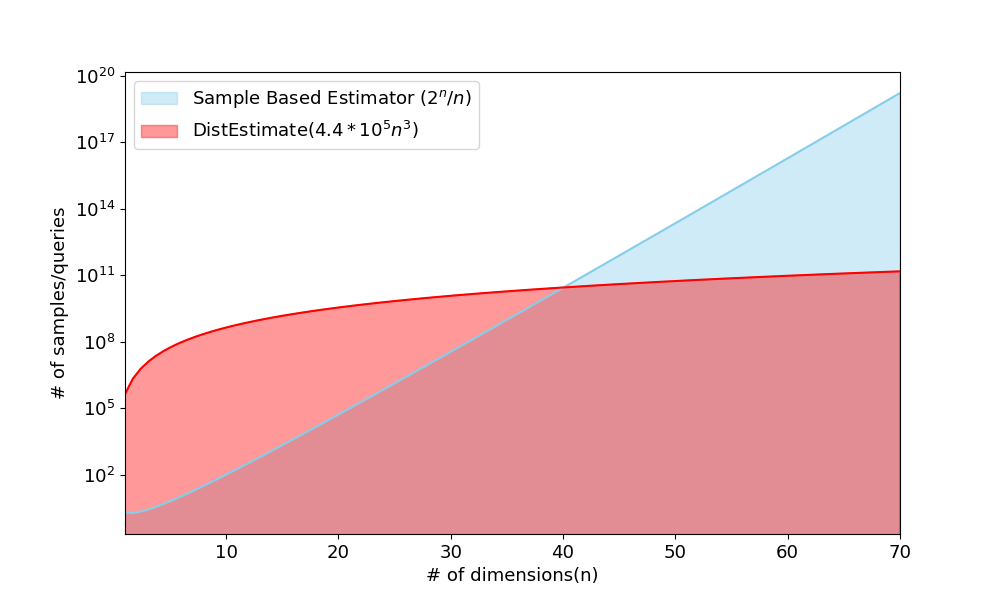}
    \caption{A plot comparing the sample/query complexity of the baseline non-conditional estimator vs. our estimator $\SubVsSubMain$ as a function of the number of dimensions $n$, for $\varepsilon=0.3$. Note that the vertical axis is in the log scale.}
    \label{fig:valiant}
\end{figure}

\paragraph{Organization}
We start with a short background on related threads of investigation in Section~\ref{sec:relatedwork}. Then in Section~\ref{sec:notation} we define the notation we use in most of the paper. We present the paper's main contribution, the estimator $\SubVsSubMain$, along with its proof of correctness in Section~\ref{sec:subcondvssubcond}.  In Section~\ref{sec:experiments}, we present the result of the evaluation of our implementation of \SubVsSubMain. Finally, we conclude in Section~\ref{sec:conclusion} and discuss some open problems. In the interest of exposition, we defer some proofs to the Appendix.

\section{RELATED WORK}\label{sec:relatedwork}

Distance estimation is one of the many problems in the broader area of distribution testing. Apart from estimation, there is extensive literature on the problems of identity and equivalence testing. The problem of identity testing involves returning $\mathsf{Accept}$ if $d_{TV}(\mathcal{P},\mathcal{P}^*) = 0$ and returning $\mathsf{Reject}$ if $d_{TV}(\mathcal{P},\mathcal{P}^*) >\varepsilon$, where $\mathcal{P}$ is an unknown distribution and $\mathcal{P}^*$ is known, i.e. you have a full description of $\mathcal{P}^*$. Equivalence testing is the generalization of identity testing. It is the problem of deciding between $d_{TV}(\mathcal{P},\mathcal{Q}) = 0$ and $d_{TV}(\mathcal{P},\mathcal{Q}) >\varepsilon$ where both $\mathcal{P}$ and $\mathcal{Q}$ are unknown. It is worth emphasizing that for both identity and equivalence testing problems, any answer from the tester ($\mathsf{Accept}$ or $\mathsf{Reject}$) is considered valid if  $ 0 < d_{TV}(\mathcal{P},\mathcal{Q}) \le \varepsilon$. Provided only sample access, the sample complexity of identity testing is $\Theta\left(2^{n/2}/\varepsilon^2\right)$ \citep{PL08,VV17} and of equivalence testing is $\max(2^{2n/3}\varepsilon^{-4/3},2^{n/2}\varepsilon^{-2})$ \citet{CDVV14,VV17}. While testing is of theoretical interest, its practical application faces significant limitations primarily because testers must accept only when two given distributions are identical. In real-world scenarios, distributions are rarely identical but often exhibit close similarity. Consequently, a simplistic tester that consistently returns $\mathsf{Reject}$  can meet the specifications. A more rigorous definition of a tester is required to address this limitation, including estimating the distance between the two distributions. Unfortunately, this introduces a considerable challenge. \cite{VV11a} demonstrate that in the classical sampling model, the necessary number of queries increases to $2^n/n$, a significant jump from the previous $2^{2n/3}$.

To sidestep the exponential lower bounds on testing, the conditional sampling model, or \cond, was introduced independently by \citeauthor{CFGM16} and \citeauthor{CRS15}, and has been successfully applied to various problems, including identity and equivalence testing. In this model, the sample complexity of identity testing is $\Theta(\varepsilon^{-2})$ (independent of $n$), while for equivalence testing the best-known upper and lower bounds are $O((\log n)/\varepsilon^{5})$ \citep{FJOPS15}, and $\Omega(\sqrt{\log n})$ \citep{ACK14} respectively. A survey by \citet{C20} provides a detailed view of testing and related problems in various sampling models.

Our work investigates the distance estimation problem using the \subcond model, a restriction of $\mathsf{COND}$. Unlike \cond, which allows conditioning on arbitrary sets, the \subcond model allows conditioning only on sets that are subcubes of the domain. While \cond significantly improves the sample complexity, it is not easily implementable in practice, as arbitrary subsets are not efficiently represented and sampled from.
With a view towards plausible conditional models, \citet{CRS15,BC18} came up with the $\mathsf{SUBCOND}$ model, which is particularly suited to the Boolean hypercube $\bn$. \citet{CCKLW21} used the $\mathsf{SUBCOND}$ model to construct a nearly-optimal $\Theta(\sqrt{n})$ uniformity testing algorithm for $\{0,1\}^n$, demonstrating its natural applicability for high-dimensional distributions. Then \citet{CJLW21} used $\mathsf{SUBCOND}$ to study the problems of learning and testing junta distributions supported on $\{0,1\}^n$.  \citet{BC18} developed a test for equivalence in the $\mathsf{SUBCOND}$ model, with query complexity of $O(n^2/\varepsilon^2)$. However, before this work, there was no distance estimation algorithm in the $\mathsf{SUBCOND}$ oracle model, and indeed even in the general \cond model. 

\paragraph{Lower Bound}\label{sec:lowerbound}
The problem of testing with \subcond access has a query complexity lower bound of $\Omega(n/\log(n))$ as a direct consequence of Theorem 11 of ~\citet{CDKS20}. For completeness, we formally prove the lower bound  in Appendix~\ref{sec:lboundforsvs}.

\section{NOTATIONS AND PRELIMINARIES}\label{sec:notation}

We will focus on probability distributions over $\bn $. For any distribution $\mathcal{D}$ on $\bn$ and an element $\sigma \in \bn$, $\mathcal{D}(\sigma)$ is the  probability of $\sigma$ in distribution $\mathcal{D}$. 
Further, $\sigma \sim \mathcal{D}$ represents that $\sigma$ is sampled from $\mathcal{D}$. 
The total variation ($\mathrm{TV}$) distance of two probability distributions $\mathcal{P}$ and $\mathcal{Q}$ is defined as: $ d_{TV}(\mathcal{P},\mathcal{Q})= \frac{1}{2}\sum_{\sigma \in \bn} |\mathcal{P}(\sigma)-\mathcal{Q}(\sigma)|$. For a random variable $v$, the expectation is denoted as $\expect[v]$, and the variance as $\variance[v]$. 

For clarity of exposition, we will hide the use of the ceiling operator $\lceil x \rceil$ wherever integral values are required, such as the number of samples or the number of iterations of a loop. We use $[n]$ to represent the set $\{1,2\ldots,n\}$.

Consider a discrete r.v. that takes the value $v$ with probability $p$. The count of trials required to observe $k$ instances of $v$ follows a \textit{negative binomial} distribution, denoted as $\nb(k,p)$. 
The expected value $\expect[\nb(k,p)]$ is $k/p$, and its variance $\variance[\nb(k,p)]$ is $k(1-p)/p^2$. We state the Chernoff and Chebyshev concentration bounds in  Appendix~\ref{sec:appendix:useful} for completeness. Further, we also make use of the following tail bound for negative binomials:
\begin{proposition}[\cite{B11}]\label{prop:negbinconc}
	For $\gamma > 1$, $\Pr[\nb(k,p) > \gamma\expect[\nb(k,p)] ] \leq \exp\left(-\frac{\gamma k(1-1/\gamma)^2}{2}\right)$
\end{proposition}

If $\sigma$ is a string of length $n>0$, then $\sigma_i$ denotes the $i^{th}$ element of $\sigma$, and for $1\leq j\leq n$, $\sigma_{<i}$ denotes the substring of $\sigma$ from $1$ to $i-1$,   $\sigma_{<i} = \sigma_1\cdots\sigma_{i-1}$; similarly $\sigma_{\leq i} = \sigma_1\cdots\sigma_{i}$, and $\sigma_{<1}$ denotes the empty (length 0) string, also denoted as $\bot$. 

For any distribution $\mathcal{D}$ and string $\rho$, such that $0\leq |\rho| < n$, the distribution  $\mathcal{D}_{\rho}$ denotes the marginal distribution of $\mathcal{D}$  in the ${|\rho|+1}^{th}$ dimension, conditioned on the string $\rho$, i.e., 
$\mathcal{D}_\rho(b) = \frac{\Pr_{\sigma \sim \mathcal{D}}[ (\sigma_{|\rho|+1} = b )\wedge (\sigma_{\leq|\rho|} = \rho )]}{\Pr_{\sigma \sim \mathcal{D}}[ \sigma_{\leq |\rho|} = \rho ]} $.

\begin{definition}
	A sampling oracle $\samp(\mathcal{D})$  takes in a distribution $\mathcal{D}$, and returns a sample $\sigma  \in \{0,1\}^n $ such that $	\Pr\left[\samp(\mathcal{D}) = \sigma \right] =	\mathcal{D}(\sigma)$.
\end{definition}

\begin{definition}
	A subcube conditioning oracle $\subcond(\mathcal{D},\rho)$ takes in a distribution $\mathcal{D}$, and a query string $\rho$ with $0 \le |\rho| < n$, and returns a sample $\sigma  \in \{0,1\}^n $ such that $\Pr\left[\subcond(\mathcal{D},\rho) = \sigma \right] = 1_{(\sigma_{\leq |\rho|} = \rho)}	\prod_{i=|\rho|+1}^{n}\mathcal{D}_{\sigma_{< i}}(\sigma_{i})$.
\end{definition}

\begin{definition}
	A conditional marginal oracle $\condmar(\mathcal{D},\rho)$ takes in a distribution $\mathcal{D}$, and a query string $\rho$ with $0 \le |\rho| < n$, and returns a sample $b  \in \{0,1\} $ such that $	\Pr\left[\condmar(\mathcal{D},\rho) = b \right] =	\mathcal{D}_{\rho}(b)$.
\end{definition}

Note that the chain rule implies that $\subcond(\mathcal{D},\bot)$ is the same as $\samp(\mathcal{D})$.

\subsection{Distance Approximation}
We adapt the distance approximation algorithm of~\citet{BGMV20}, that takes as input  two distributions $\mathcal{P}$ and $\mathcal{Q}$, and provides an $(\eta,\delta)$ estimate of $d_{TV}(\mathcal{P}, \mathcal{Q})$.  The proof is deferred to Appendix~\ref{sec:appendix:evalvseval}. 
\begin{restatable}{lemma}{evalvseval}{(Theorem 3.1 in~\citep{BGMV20})}\label{lem:evalvseval}	
	For distributions $\mathcal{P}$ and $\mathcal{Q}$ over $\bn$, and $\sigma \in \bn$,  let $p_\sigma$ and $q_\sigma$ be functions such that $p_\sigma \in (1\pm \eta)\mathcal{P}(\sigma)$, and ${q_\sigma} \in (1\pm\eta)\mathcal{Q}(\sigma)$.  
	Given a set of samples $S$ from  $\mathcal{Q}$, and $\eta \in (0,1)$  
	along with the $p_\sigma$ and $q_\sigma$ for each $\sigma \in S$, let $\mathtt{est} = \frac{1}{|S|}\sum_{i \in S} 1_{q_\sigma > p_\sigma} \left(1-\frac{p_\sigma}{q_\sigma}\right)$. 
	\begin{align*}
	&\Pr\left[  \mathtt{est} \not \in  \left( d_{TV}(\mathcal{P},\mathcal{Q}) \pm \frac{3\eta}{1-\eta}\right) \right]  \\
	&\leq 2\exp\left(-2|S|\left(\frac{\eta }{1-\eta}\right)^2\right)
	\end{align*}
\end{restatable}

\subsection{Taming Distributions}\label{sec:taming}
Given a distribution $\mathcal{D}$, we will define and construct a new distribution $\mathcal{D'}$ that has desirable properties critical for \SubVsSubMain. 
\begin{definition}\label{def:tame}
	A distribution $\mathcal{D'}$ is $\theta$-tamed,  if $$\forall \sigma \in \bn,\forall \ell \in [n] \quad \mathcal{D'}_{\sigma_{<\ell}}(\sigma_\ell) \in [\theta, 1-\theta]$$
\end{definition}
\begin{definition}
	For a given distribution $\mathcal{D}$, and parameter $\theta \in [0,1/n)$,  distribution  $\mathcal{D'}$ is the $\theta$-tamed sibling of $\mathcal{D}$, if $\mathcal{D'}$ is $\theta$-tamed and  $d_{TV}(\mathcal{D},\mathcal{D'}) \leq \theta n$.
\end{definition}

Henceforth, we will use $\mathcal{D'}$ as shorthand to refer to the $\theta$-tamed sibling of $\mathcal{D}$ and omit mentioning $\theta$ whenever $\theta$ is evident from the context. We will now show in the following lemma that given \subcond query access to distribution $\mathcal{D}$,  \condmar, and \samp access to $\mathcal{D'}$ can be simulated efficiently. We defer the proof to Appendix~\ref{sec:appendix:tame}.

\begin{restatable}{lemma}{tame}\label{lem:tame} 
	Given a distribution $\mathcal{D}$ and parameter $\theta \in [0,1/n)$, every \condmar query to $\mathcal{D'}$ can be simulated by making one  \subcond query to $\mathcal{D}$, and every \samp query to $\mathcal{D'}$ can be simulated by making $n$  \subcond queries to $\mathcal{D}$.
\end{restatable}

\section{\SubVsSubMain: A DISTANCE ESTIMATION ALGORITHM}\label{sec:subcondvssubcond}
We now present the pseudocode of our algorithm \SubVsSubMain, and the \SubToAeval and \SubVsSub subroutines. 
The following subsection will provide a high-level overview of all our algorithms and formal analysis.
	{\setlength{\algomargin}{1em}
		{\SetKwComment{Comment}{$\triangleright$ }{}
			\begin{algorithm2e}[ht]
				\DontPrintSemicolon
				\caption{$\SubVsSubMain(\mathcal{P},\mathcal{Q},\varepsilon,\delta)$}\label{alg:subvssubmain}    
				All{$j =1$ {\bfseries to} $4.5\log(2/\delta)$}{
					$r_j \gets \SubVsSub(\mathcal{P},\mathcal{Q},\varepsilon)$
				}			
				$ \kappa \gets  \mathtt{Median}_{j} (r_j)$\\
				\Return  $\kappa$
			\end{algorithm2e}
			\begin{algorithm2e}[ht]
				\DontPrintSemicolon
				\caption{$\SubVsSub(\mathcal{P},\mathcal{Q},\varepsilon)$}\label{alg:subvssub}
				\Comment{$\mathcal{P'}$ and $\mathcal{Q'}$ are $\varepsilon/8n$-tamed siblings of $\mathcal{P}$ and $\mathcal{Q}$ resp.}
				$\eta \gets \varepsilon/(\varepsilon+4)$\label{line:eta}\\
				$ m_{out} \gets \frac{\log(24)}{2} \left(\frac{1-\eta}{\eta}\right)^2
				$ \label{line:mout}\\
				$m_{in} \gets 32\log(48m_{out})$\label{line:min}\\
				$\mathtt{est}\gets 0$\\
				All{$i=1$ {\bfseries to} $m_{out}$}{
					
					$\sigma \gets \samp(\mathcal{Q'})$\\
					All{$j=1$ {\bfseries to} $m_{in}$}{
						$p_{j} \gets  \SubToAeval(\mathcal{P'}, \sigma, \eta)$\label{line:pi}\\
						$q_{j} \gets  \SubToAeval(\mathcal{Q'}, \sigma, \eta)$\label{line:qi}\\
					}
					$\hat{p} \gets \mathtt{Median}_{j} (p_{j})$\\
					$\hat{q} \gets \mathtt{Median}_{j} (q_{j})$\\
					\If{$ \hat{q}>\hat{p}$}{
						$\mathtt{est} \gets \mathtt{est} + 1 - \hat{p}/\hat{q}$
					}
				}
				\Return $\mathtt{est}/m_{out}$~\label{line:return}
			\end{algorithm2e}
		}
		
	}
	\setlength{\algomargin}{1em}
		\SetKwComment{Comment}{$\triangleright$ }{}
		\begin{algorithm2e}[ht] 
			\DontPrintSemicolon
			\caption{$\SubToAeval(\mathcal{D'},\sigma, \eta)$\label{alg:subtoaeval}}
			\Comment{$\mathcal{D'}$ is  $\varepsilon/8n$-tamed sibling of $\mathcal{D}$}
			$t \gets 0$\\
			$k \gets  4n\eta^{-2}(1+\eta^2) $\label{line:c}\\
			All{$i = 1$ {\bfseries to} $n$}{
				$x_i \gets 0$\\
				$f \gets 0$\\
				\While{$f < k$ \label{line:whilestart}}{   
					$\alpha \gets \condmar(\mathcal{D'},\sigma_{<i})$ \\
					$x_{i} \gets x_{i}+1$\\
					$t\gets t+ 1 $\\
					\lIf{ $t=   64n^3\eta^{-2}(1+\eta)^2\varepsilon^{-1} $\label{line:threshold}}{ \Return 0}
					\lIf{$\alpha= \sigma_i$}{$f \gets f+1$}	 \label{line:whileend}
				}
			}			
			$ d \gets  \prod_{i=1}^{n} k/x_{i}$\\
			\Return $d$
		\end{algorithm2e}
\subsection{High-Level Overview}
In Section~\ref{sec:outline-subvssub}, we introduce the main ideas of our algorithms, \SubVsSubMain and \SubVsSub. Then, in Section~\ref{sec:subtoaeval}, we explain the key concepts of the \SubToAeval subroutine.
\subsubsection{Outline of the \SubVsSubMain and   \SubVsSub routines} \label{sec:outline-subvssub}

The pseudocode of $\SubVsSubMain$ and $\SubVsSub$ is given in Alg.~\ref{alg:subvssubmain} and ~\ref{alg:subvssub} respectively. $\SubVsSubMain$ takes as input two distributions $\mathcal{P}$ and $\mathcal{Q}$ defined over the support $\bn$, along with the parameter $\varepsilon$ for tolerance and the parameter $\delta$ for confidence, and returns an $\varepsilon$-additive estimate of $d_{TV}(\mathcal{P},\mathcal{Q})$ with probability at least $1-\delta$. 

The \SubVsSub subroutine call returns an estimate $r_j$ of $d_{TV}(\mathcal{P},\mathcal{Q})$ such that $\Pr[r_j \in (d_{TV}(\mathcal{P},\mathcal{Q})\pm \varepsilon)] \geq 2/3$, and \SubVsSubMain makes  $48\log(1/\delta)$ calls to boost the overall probability to $1-\delta$, using the Chernoff bound on the median of the estimates.

$\SubVsSub$ takes as input the distributions $\mathcal{P}$ and $\mathcal{Q}$, and creates their $\varepsilon/8n$-tamed siblings $\mathcal{P'}$ and $\mathcal{Q'}$ that are $\varepsilon/8$ close to $\mathcal{P}$ and $\mathcal{Q}$ in TV distance, and have the property that all of their marginal probabilities are lower bounded by $\Omega(\varepsilon/8n)$. 
The bounded marginal property of $\mathcal{P}'$ and $\mathcal{Q'}$ is crucial for the polynomial query complexity of \SubVsSub. The construction of $\mathcal{P}'$ and $\mathcal{Q}'$, and the claimed guarantees, are discussed in Section~\ref{sec:taming}.
$\SubVsSub$ then computes the constants $\eta$, $m_{out}$, and $m_{in}$ (the counts of iterations of the outer and inner loop).

$\SubVsSub$ then draws $m_{out}$ samples $\sigma \sim \mathcal{Q'}$, and for each sample $\sigma$, calls \SubToAeval $m_{in}$ times to find the $(1\pm\eta)$ estimates of $\mathcal{Q'}(\sigma)$ and $\mathcal{P'}(\sigma)$. The \SubToAeval subroutine puts an upper limit on the number of \condmar oracle calls, and the limit is set high enough to ensure that the estimates, $\hat{p}$ and $\hat{q}$, are correct with the required confidence. \SubVsSub then computes the distance using these estimates as given in Lemma~\ref{lem:evalvseval}.

\subsubsection{Outline of the $\SubToAeval$ subroutine}\label{sec:subtoaeval}
The \SubToAeval subroutine takes as input an element $\sigma \in \bn$, a distribution $\mathcal{D}$ over $\bn$, and a parameter $\eta$.
\SubToAeval outputs an $\eta$-multiplicative estimate of $\mathcal{D}(\sigma)$. The probability $\mathcal{D}(\sigma)$ can be expressed as a product of marginals, $\mathcal{D}(\sigma) = \prod_{i=1}^{n} \mathcal{D}_{\sigma_{<i}}(\sigma_i)$, by applying the chain rule. Essentially, the subroutine  approximates each marginal $\mathcal{D}_{\sigma_{<i}}(\sigma_i)$ by $k/x_i$ for each $i \in [n]$, using the $\condmar$ oracle. The product $\prod_{i=1}^{n} k/x_{i}$ is then employed as the final estimate for $\mathcal{D}(\sigma)$. 

In this context, the variable $x_i$ represents the total count of $\condmar(\mathcal{D},\sigma_{<i})$ queries executed until $k$ occurrences of $\sigma_i$ are observed. Given that $\mathcal{D}_\rho(b) =  \Pr_{w \sim \condmar(\mathcal{D},{\rho})}[w = b]$ for any $\rho$ (as discussed in Section \ref{sec:notation}), the ratio $k/x_i$ is an intuitive choice as an estimator for $\mathcal{D}_{\sigma_{<i}}(\sigma_i)$. Moreover, to ensure the subroutine terminates, a total number of calls to the $\condmar$ oracle are monitored, and if they ever exceed the threshold $ 64n^3\eta^{-2}(1+\eta)^2\varepsilon^{-1}$, the subroutine terminates and returns $0$.

We now discuss our technical contribution - showing the correctness of \SubToAeval when the threshold is set to $O(n^3)$ (for this discussion, we will set aside the dependency on $\eta$). To estimate $\mathcal{D}(\sigma)$, it is essential to estimate each of the $n$ marginals, $\mathcal{D}_{\sigma_{<i}}(\sigma_i)$, to within an error margin of approximately $1+1/n$. This would require at least $n^2/\mathcal{D}_{\sigma_{<i}}(\sigma_i)$ queries for each marginal. Consequently, the total query complexity would sum up to $\sum_{i=1}^{n} n^2/\mathcal{D}_{\sigma_{<i}}(\sigma_i)$. This quantity is at least $\Omega(n^2)$, but it could potentially be unbounded as $\mathcal{D}_{\sigma_{<i}}(\sigma_i)$ can take arbitrarily small values. In the forthcoming section, we reduce this complexity to $O(n^3)$ through a more nuanced analysis.

\subsection{Theoretical Analysis}\label{sec:analysis}
In this section, we will prove our main Theorem~\ref{thm:scondvsscond}. The proof of Theorem~\ref{thm:scondvsscond} relies on Lemma~\ref{lem:subtoaeval},  which claims the correctness of the $\SubToAeval$ subroutine and upper bound its query complexity. We will prove the lemma later.

\begin{restatable}{lemma}{}\label{lem:subtoaeval}
	$\SubToAeval(\mathcal{D'}, \sigma, \eta)$ takes as input distribution $\mathcal{D'}$, $\sigma \in \bn$, $\eta \in (0,1/5)$ and returns $d$, then
	\begin{align*}
		\Pr[d  \in (1\pm \eta) \mathcal{D'}(\sigma) ] \geq 5/8
	\end{align*}
	\SubToAeval makes $ O(n^3/\eta^2)$ $\condmar$ queries to $\mathcal{D'}$. 
\end{restatable}

\scondvsscond*
\begin{proof}
	We will first show that  the algorithm $\SubVsSub(\mathcal{P},\mathcal{Q},\varepsilon)$  returns $\mathtt{est}$ such that 
	\begin{align*}
		\Pr[ \mathtt{est} \in (d_{TV}(\mathcal{P},\mathcal{Q})   \pm \varepsilon)  ] \geq 5/6
	\end{align*}
	
	Since \SubVsSubMain returns the median of the independent estimates provided by \SubVsSub, then applying the Chernoff bound, we have $\Pr[ \kappa \in (d_{TV}(\mathcal{P},\mathcal{Q}) \pm \varepsilon)] \geq 1-\delta$.

	We will now consider the events that could lead to an incorrect estimate. 
	Recalling that $\mathcal{P'}$ and $\mathcal{Q'}$ are $\varepsilon/8n$-tamed siblings of $\mathcal{P}$ and $\mathcal{Q}$ we define $\mathtt{Bad}_i^{\hat{p}}$ and $\mathtt{Bad}_i^{\hat{q}}$ to be the events that in the $i^{th}$ iteration  of \SubVsSub, $\hat{p}\not \in (1\pm \eta)\mathcal{P'}(\sigma)$, and  $\hat{q} \not \in (1\pm \eta)\mathcal{Q'}(\sigma)$, respectively.  We bound the probability of $\mathtt{Bad}_i^{\hat{p}}$ and $\mathtt{Bad}_i^{\hat{q}}$ in the following claim, whose proof is deferred to Appendix.
	
	\begin{restatable}{claim}{badip}\label{claim:badip}
		$\Pr[\mathtt{Bad}_{i}^{\hat{p}}]  \leq 1/24m_{out}$ and $\Pr[\mathtt{Bad}_{i}^{\hat{q}}]  \leq 1/24m_{out}$.
	\end{restatable}
	
	Now we define $\mathtt{Bad} = \bigcup_{i \in [m_{out}]}(\mathtt{Bad}_i^{\hat{p}} \cup \mathtt{Bad}_i^{\hat{q}})$, i.e.,  $\mathtt{Bad} $ captures the event that at least one of the estimates is incorrect. Then from Claim~\ref{claim:badip} and the union bound, 	  $\Pr[\mathtt{Bad}] = \Pr\left[\underset{i \in [m_{out}]}{\bigcup}(\mathtt{Bad}^{\hat{p}}_i \cup \mathtt{Bad}^{\hat{q}}_i)\right] 
	\leq \underset{i\in[m_{out}]}{\sum}(\Pr[\mathtt{Bad}^{\hat{p}}_i] + \Pr[\mathtt{Bad}^{\hat{q}}_i])
	\leq m_{out}\left( \frac{1}{24m_{out}} +\frac{1}{24m_{out}}\right)
	\leq  \frac{1}{12}$.
	
	Now, let's assume the event $\overline{\mathtt{Bad}}$. We have a set of $m_{out}$ samples from $\mathcal{Q}'$, and for each sample $\sigma$ we have $\hat{p}$ and $\hat{q}$ such that $\hat{p} \in (1\pm \eta) \mathcal{P'}(\sigma)$ and $\hat{q} \in (1\pm \eta) \mathcal{Q'}(\sigma)$.  This fulfills the condition of Lemma~\ref{lem:evalvseval}, and hence 
	substituting $|S| = m_{out}$ (Line~\ref{line:mout} of Alg.\ref{alg:subvssub}) we have, 
	\begin{align*}
		 &\Pr\left[  \mathtt{est} \not \in   \left(d_{TV}(\mathcal{P'},\mathcal{Q'}) \pm \frac{3\eta}{1-\eta}\right) \cap \overline{\mathtt{Bad}}  \right] 
		\\ &\leq    2\exp\left(-2m_{out}\left(\frac{\eta }{1-\eta}\right)^2\right) \leq  2\exp\left(- \log(24)\right) = \frac{1}{12}
	\end{align*}
	Substituting $\eta = \frac{\varepsilon}{\varepsilon+4}$ from  Alg.\ref{alg:subvssub},  we have, $	\Pr\left[ \mathtt{est}   \not \in   \left(d_{TV}(\mathcal{P'},\mathcal{Q'}) \pm \frac{3\varepsilon}{ 4}\right)\cap \overline{\mathtt{Bad}} \right] \leq \frac{1}{12}$. Then, 
	\begin{align*}	
		&\Pr\left[ \mathtt{est}   \not \in   \left(d_{TV}(\mathcal{P'},\mathcal{Q'}) \pm \frac{3\varepsilon}{ 4}\right)\right] \\
		&\leq 	\Pr\left[ \mathtt{est}   \not \in   \left(d_{TV}(\mathcal{P'},\mathcal{Q'}) \pm \frac{3\varepsilon}{ 4}\right) \cap \overline{\mathtt{Bad}} \right] + \Pr[\mathtt{Bad}]\\
		&\leq 1/12+1/12 = 1/6 
	\end{align*}
	
	Since $\mathcal{P}'$ and $\mathcal{Q'}$ are $\varepsilon/8n$-tamed siblings of $\mathcal{P}$ and $\mathcal{Q}$, from Lemma~\ref{lem:tame} we know that $d_{TV}(\mathcal{P}',\mathcal{P})\leq \varepsilon/8$ and $d_{TV}(\mathcal{Q'},\mathcal{Q})\leq \varepsilon/8$.	
	Then, from the triangle inequality, we have the  bounds on $ d_{TV}(\mathcal{P},\mathcal{Q}) $:
	\begin{align*}
		d_{TV}(\mathcal{P'},\mathcal{Q'}) & \in  d_{TV}(\mathcal{P},\mathcal{Q}) \pm \left(d_{TV}(\mathcal{P}',\mathcal{P}) +d_{TV}(\mathcal{Q'},\mathcal{Q})\right) \\
		&\in d_{TV}(\mathcal{P},\mathcal{Q}) \pm \varepsilon/4
	\end{align*}
	
	Combining the two, we get that $\Pr[ \mathtt{est} \not \in \left(d_{TV}(\mathcal{P},\mathcal{Q}) \pm \varepsilon \right) ] \leq 1/6$, and hence we have our claim. 
	
	Now, we will complete the proof by showing an upper bound on the query complexity.
	The total number of $\condmar$ queries made by $\SubToAeval(\mathcal{D'}, \sigma, \eta)$ in a single invocation is $ 64n^3\eta^{-2}(1+\eta)^2\varepsilon^{-1}= O(n^3\varepsilon^{-3})$. Then  $\SubVsSub(\mathcal{P},\mathcal{Q},\varepsilon)$ makes $m_{in}m_{out} = O(\varepsilon^{-2}\log(\varepsilon^{-1}))$ many calls to $\SubToAeval$. Finally,   \SubVsSubMain calls \SubVsSub $48\log(1/\delta)$ many times. Thus the total number of queries to the $\condmar$ oracle made by \SubVsSubMain is $O\left(n^3\log(1/\delta)\log(\varepsilon^{-1})/\varepsilon^5 \right)$.
\end{proof}

\begin{proof}[Proof of Lemma~\ref{lem:subtoaeval}]
	Consider the subroutine $\SubToAeval_1(\mathcal{D'},\sigma, \eta)$ (Alg.~\ref{alg:subtoaeval'}), that is the same as $\SubToAeval(\mathcal{D'}, \sigma, \eta)$ (Alg.~\ref{alg:subtoaeval}) except in one critical aspect:  the termination condition on Line~\ref{line:threshold} of \SubToAeval has been removed. This implies that while $\SubToAeval(\mathcal{D'},\sigma, \eta)$ terminates if the number of calls to the \condmar oracle exceeds the threshold $ 64n^3\eta^{-2}(1+\eta)^2\varepsilon^{-1}$,  $\SubToAeval_1(\mathcal{D'},\sigma, \eta)$ does not enforce this restriction, thereby allowing an unlimited number of calls to the \condmar oracle. 
	Note that we use variable names $d_1$ and $t_1$ in \SubToAeval' to distinguish them from $d$ of $t$ of \SubToAeval.
	This modification is critical for our analysis as it leads to the variable $x_i$ in $\SubToAeval_1(\mathcal{D'},\sigma, \eta)$ following the negative binomial distribution.

	{\setlength{\algomargin}{1em}
		\DontPrintSemicolon
		\begin{algorithm2e}[h]
			\caption{$\SubToAeval_1(\mathcal{D}',\sigma, \eta)$} \label{alg:subtoaeval'}
			$t_1 \gets 0$\\
			$k \gets  4n\eta^{-2}(1+\eta^2) \label{line:k}$\\
			All{$i = 1$ {\bfseries to} $n$}{
				$x_i \gets 0$\\
				$f \gets 0$\\
				\While{$f <k$}{   
					$\alpha \gets \condmar(\mathcal{D'},\sigma_{<i})$ \\
					$x_{i} \gets x_{i}+1$\\
					$t_1 \gets t_1 + 1$\\
					\lIf{$\alpha = \sigma_i$}{$f \gets f+1$}				
				}
			}			
			$ d_1 \gets  \prod_{i=1}^{n} k/x_{i}$ \label{line:outsideloopsubtoaeval'}\\
			\Return $d_1$
		\end{algorithm2e}
	}

	\begin{remark}
		Henceforth we will use $t_1$ and $x_i$ to denote the final values of $t_1$ and $x_i$, as on Line~\ref{line:outsideloopsubtoaeval'}.
	\end{remark}
	
	We will now show that the  $\SubToAeval_1(\mathcal{D'},\sigma, \eta)$ correctly estimates $\mathcal{D'}(\sigma)$ with high probability  (Lemma~\ref{lem:d1bound}) and then we show that it makes fewer than $  64n^3\eta^{-2}(1+\eta)^2\varepsilon^{-1}$  calls to \condmar oracle with high probability (Lemma~\ref{lem:t1bound}).  These results will help us establish analogous results for the subroutine $\SubToAeval(\mathcal{D},\sigma, \eta )$ and in validating our Lemma~\ref{lem:subtoaeval}.
	
	\begin{observation}\label{obs:mainobservation}
		Comparing \SubToAeval and $\SubToAeval_1$, we observe that \SubToAeval returns an incorrect estimate $d$ in two cases. Either \SubToAeval returns incorrect $d_1$, or else $\SubToAeval_1$ makes more than $64n^3\eta^{-2}(1+\eta)^2 \varepsilon^{-1}$ queries. Stated formally,
		\begin{align*}
			&\Pr[d \not \in (1\pm \eta)\mathcal{D'}(\sigma)] \\
			&\leq  \Pr\left[d_1 \not \in (1\pm \eta)\mathcal{D'}(\sigma)\right] + \Pr\left[t_1 \geq  64n^3\eta^{-2}(1+\eta)^2 \varepsilon^{-1}\right]
		\end{align*}
	\end{observation}
	
	Our proof will use the  following prop. and lemmas.
	\begin{restatable}{proposition}{negbin}
		\label{prop:negative-binomial}
		For $i \in [n]$,  the value of $x_i$ (in Alg.~\ref{alg:subtoaeval'}) is distributed as   $\nb(k, \mathcal{D}_{\sigma_{<i}}(\sigma_i))$
	\end{restatable}
We  prove  the above proposition in Appendix~\ref{sec:appendix:d1t1bound}.
	\begin{restatable}{lemma}{dbound}\label{lem:d1bound}
		$\Pr[ d_1 \in (1\pm \eta) \mathcal{D'}(\sigma)] \geq 2/3$.
	\end{restatable} 
	\begin{proof}
		We use a variance reduction technique introduced by \citet{DF91}.  $x_i$ on Line~\ref{line:outsideloopsubtoaeval'} is distributed according to $\nb(k, \mathcal{D'}_{\sigma_{<i}}(\sigma_i))$, so we have $\expect[x_i]= k/\mathcal{D'}_{\sigma_{<i}}(\sigma_i)$, and hence,  $k/\expect[x_i]= \mathcal{D'}_{\sigma_{<i}}(\sigma_i)$.  Now since $d_1=\prod_{j=1}^{n}k/x_i$, we have  $\expect[1/d_1] = \expect[\prod_{i=1}^{n} x_i/k]=\prod_{i=1}^{n}1/\mathcal{D'}_{\sigma_{<i}}(\sigma_i)$.
		\begin{align*}
			\frac{\variance[1/d_1]}{\expect[1/d_1]^2} =  \frac{\expect[1/d_1^2]}{\expect[1/d_1]^2} -1 &= \prod_{i=1}^{n} \frac{\expect[(x_i/k)^2]}{\expect[x_i/k]^2} -1 \\
		&	= \prod_{j=1}^{n} \left(1+\frac{\variance[x_i/k]}{\expect[x_i/k]^2}\right) -1 
		\end{align*}
		Using the fact that $x_i$ is negative binomial, we substitute $\variance[x_i/k]$ and $\expect[x_i/k]^2$, 
		\begin{align*}
		&	\frac{\variance[1/d_1]}{\expect[1/d_1]^2} = \prod_{j=1}^{n} \left(1+\frac{(1-\mathcal{D'}_{\sigma_{<i}})/k\mathcal{D'}_{\sigma_{<i}}^2}{(1/\mathcal{D'}_{\sigma_{<i}})^2}\right) -1 \\
		&	= \prod_{j=1}^{n} \left(1+\frac{1-\mathcal{D'}_{\sigma_{<i}}(\sigma_i)}{k}\right) -1 \leq \prod_{j=1}^{n} \left(1+\frac{1}{k}\right) -1
		\end{align*}
		Substituting the value of $k$ from the algorithm, we have 
		\begin{align}
			\frac{\variance[1/d_1]}{\expect[1/d_1]^2}  & \leq  \left(1 + \frac{\eta^2}{4n(1+\eta)^2} \right)^n - 1  \nonumber\\
			&\leq \exp\left(\frac{\eta^2}{4} \right) - 1\leq \frac{\eta^2}{3(1+\eta)^2} \label{line:dispersion}
		\end{align}
		The last inequality comes from the fact that for $r\in(0,1),s>1$,  $\exp\left(\frac{r}{s+1}\right) \leq 1+\frac{r}{s}$.  Recall that from the chain rule we have $\mathcal{D'}(\sigma) = \prod_{j=1}^n\mathcal{D'}_{\sigma_{<i}}(\sigma_i)$, then  $\expect[1/d_1] = 1/\mathcal{D'}(\sigma)$. 
		\begin{align}
			&\Pr[d_1  \in (1\pm \eta) \mathcal{D'}(\sigma)] 
			\nonumber =   \Pr\left[\frac{1}{d_1} \in \left[\frac{1}{1+ \eta}, \frac{1}{1-\eta} \right]\frac{1}{\mathcal{D'}(\sigma)}\right]  \\ 
			\nonumber	 & =   \Pr\left[  \frac{1}{d_1} - \expect\left[\frac{1}{d_1}\right] \in \left[-\frac{\eta}{1+ \eta}, \frac{\eta}{1-\eta}\right] \expect\left[\frac{1}{d_1}\right]  \right] \\
			\nonumber	 & \geq   \Pr\left[  \left|\expect\left[\frac{1}{d_1}\right]-\frac{1}{d_1}\right| \leq  \frac{\eta}{1+\eta}\expect\left[\frac{1}{d_1}\right]  \right] \\
			&\geq 1- \frac{(1+\eta)^2}{\eta^2}\frac{\variance\left[\frac{1}{d_1}\right]}{\expect \nonumber\left[\frac{1}{d_1}\right]^2} \geq 1- \frac{1}{3} = \frac{2}{3} 
		\end{align}
		We use the Chebyshev bound to get the second to last inequality and then substitute (\ref{line:dispersion}).
	\end{proof}
	\begin{table*}[t]
\centering
\caption{The sample complexity and runtime performance of \SubVsSubMain on real-world instances.}
\begin{tabular}{lccccc}
\toprule
Benchmark & Dimensions & \multicolumn{2}{c}{\sts} & \multicolumn{2}{c}{\cmsgen} \\ \cmidrule(lr){3-4} \cmidrule(lr){5-6}
& & \# of samples & time (in s) & \# of samples & time (in s) \\
\midrule
 s1196a\_3\_2 & 33 & 1.8e+09 & 4.1e+05 & 1.9e+09 & 5.3e+05\\
 53.sk\_4\_32 & 33 & 1.7e+09 & 2.5e+05 & 1.9e+09 & 1.6e+06\\
27.sk\_3\_32 & 33 & 1.7e+09 & 1.9e+05 & 1.9e+09 & 1.0e+06\\
s1196a\_7\_4 & 33 & 1.8e+09 & 4.6e+05 & 1.9e+09 & 5.5e+05\\
s420\_15\_7 & 35 & 2.1e+09 & 4.2e+05 & 2.3e+09 & 4.0e+05\\
111.sk\_2\_36 & 37 & 2.2e+09 & 3.5e+05 & 8.3e+08 & 6.6e+05\\
\bottomrule
\end{tabular}
\label{table}
\end{table*}

	Note that in every iteration,  $t_1$ gets incremented by the value of $x_i$. In the following lemma, we claim that $t_1$, the number of queries made by $\SubToAeval_1$, exceeds the threshold on Line~\ref{line:threshold} of \SubToAeval with low probability. We defer the proof to the Appendix~\ref{sec:appendix:d1t1bound}.
	\begin{restatable}{lemma}{tbound}\label{lem:t1bound}
		$\Pr[t_1 \geq  64n^3\eta^{-2}(1+\eta)^2\varepsilon^{-1}] \leq 1/24$
	\end{restatable}
	
	Putting together lemmas \ref{lem:d1bound} and \ref{lem:t1bound}  along with the observation \ref{obs:mainobservation} , we complete the proof:
	\begin{align*}
		&\Pr[d \not \in (1\pm \eta)\mathcal{D'}(\sigma)] \\
		&\leq  \Pr\left[d_1 \not \in (1\pm \eta)\mathcal{D'}(\sigma)\right] + \Pr\left[t_1 \geq  64n^3\eta^{-2}(1+\eta)^2\varepsilon^{-1}\right] \\
		&\leq \frac{1}{3} + \frac{1}{24} = \frac{3}{8}
	\end{align*}
\end{proof}

\subsection{The Discrete Hypergrid $\Sigma^n$} \label{sec:hypergrids}
This section extends our results beyond the hypercube $\{0,1\}^n$ to the hypergrid $\Sigma^n$, where $\Sigma$ is any discrete set. This line of investigation is motivated by the fact that in modern ML, distributions models are frequently described over hypergrids. For instance, language models are defined to be distributions over $\Sigma^n$ where $\Sigma$ is the set of tokens, and $n$ the length of the generated string. Furthermore, the \subcond oracle is particularly suitable for use in ML applications as it models autoregressive generation.

 The \subcond oracle for $\mathcal{D}$ supported on $\Sigma^n$, takes a query string $\rho \in \{\Sigma \cup*\}^n$ and draws samples from the set of strings that match all the non-$*$ characters of $\rho$. As noted in~\citep{CM23}, algorithms for $\{0,1\}^n$ do not immediately translate into algorithms for $\Sigma^n$, because the \subcond oracle does not work with the natural reduction of replacing elements $c \in \Sigma$ with their binary encoding. Nevertheless,  \SubVsSub can be extended to distributions over $\Sigma^n$, incurring a linear dependence on $|\Sigma|$.

We will now restate our result adapted to the new setting:
\begin{restatable}{theorem}{hypergrid-scondvsscond}\label{thm:hypergrid-scondvsscond}
	Given two distributions $\mathcal{P}$ and $\mathcal{Q}$ over $\Sigma^n$, along with parameters $\varepsilon \in( 0,1)$, $\delta \in (0,1/2)$, the algorithm $\SubVsSubMain(\mathcal{P},\mathcal{Q},\varepsilon,\delta)$, and with probability at least $1-\delta$ returns $\kappa$, s.t. $$ \Pr[  \kappa \in (d_{TV}(\mathcal{P},\mathcal{Q}) \pm \varepsilon)  ] \geq 1-\delta$$
	$\SubVsSubMain(\mathcal{P},\mathcal{Q},\varepsilon,\delta)$ makes $\Tilde{O}\left(n^3|\Sigma|\log(1/\delta)/\varepsilon^5 \right)$ $\subcond$ queries.
\end{restatable}

The only change required in $\SubVsSubMain$ to make it work for distributions over the $\Sigma^n$  is in the construction of the tamed siblings $\mathcal{P'}$, and $\mathcal{Q'}$.  We update the taming parameter from $ \varepsilon/8n$ to $\varepsilon/8n |\Sigma|$. Since the query complexity is proportional to $1/\theta$, we observe a linear dependence on $|\Sigma|$.

\section{EXPERIMENTS}\label{sec:experiments}

We implemented \SubVsSubMain in Python. We focus on  distributions generated by state-of-the-art combinatorial samplers \sts\citep{EGS12} and \cmsgen \citep{GSCM21}.
Our assessment included two datasets: (1) \texttt{scalable} comprising random Boolean functions over $n$ variables, with $n$ ranging from 30 to 70, and (2) \texttt{real-world}, containing instances from the ISCAS89 dataset, a standard in combinatorial testing and sampling evaluations~\citep{M20}.
To determine the ground truth TV distance for the above instances, we implement a learning-based distance estimator~\cite{noteC20}. 

For our experiments, we set the tolerance  $\varepsilon = 0.3$ and confidence $\delta = 0.4$ as the default throughout the evaluation. These parameters indicate that the estimate returned by \SubVsSubMain is expected to be within $\pm0.3$ of the ground truth, with a probability of at least $ 0.6$.

The experiments were conducted on a cluster with AMD EPYC 7713 CPU cores. We use 32 cores with 4GB of memory for each benchmark and a 24-hour timeout per instance.

Our aim was to answer the question: To what extent does \SubVsSubMain scale, i.e., how many dimensions can the estimator handle while providing guarantees?

We found that \SubVsSubMain scales to $n=70$ dimensional problems, a regime where the baseline sample-based estimators would require $10^7\times$ more samples. The estimates are empirically confirmed to be of high quality when compared against the ground truth, falling within the allowed tolerance bound in all cases where we could determine the ground truth.

Table~\ref{table} details the performance of \SubVsSubMain on 6  \texttt{real-world} benchmarks. The algorithm successfully finished on all benchmarks with dimensionality up to $n=37$. The table specifies the benchmark name, dimensionality, sample count, and processing time for both \sts and \cmsgen.

The sample complexity of \SubVsSubMain relative to a baseline sample-based estimator is illustrated in Figure~\ref{fig:valiant}(in Section~\ref{sec:intro}). For this, we use \texttt{scalable} benchmarks. Remarkably, for the largest instance handled ($n=70$ dimensions), \SubVsSubMain outperformed the baseline by a factor greater than $10^7$.

\section{CONCLUSION}\label{sec:conclusion}
This paper focused on the distance estimation problem in the \subcond model. We sought to alleviate the significant weakness of the prior state of the art: the estimators required an exponentially large number of queries. Our primary contribution, \SubVsSubMain, enables distance estimation in $\mathcal{O}(n^3/\varepsilon^5)$ queries. We also implemented \SubVsSubMain and tested it on distributions generated by combinatorial samplers, showing the scalability of our approach. The problem of closing the gap between the $\mathcal{O}(n^3/\varepsilon^5)$ upper bound and the $\Omega(n/\log(n))$ lower bound, remains open in all \cond models.

\section{ACKNOWLEDGEMENTS}\label{sec:ack}

 This research is part of the programme DesCartes and is supported by the National Re-
search Foundation, Prime Minister’s Office, Singapore, under its Campus for Research Excellence and Technological Enterprise (CREATE) programme. The computational
works of this article were performed on the resources of the National Supercomputing Centre, Singapore(\url{www.nscc.sg}).

The authors decided to forgo the old convention of alphabetical ordering of authors in favor of a randomized ordering, denoted by \textcircled{r}.

\bibliographystyle{plainnat}
\bibliography{main}

\newpage
\appendix
\onecolumn
\section{Appendix}\label{sec:appendix}
\subsection{Useful Inequalities}\label{sec:appendix:useful}
\begin{lemma}[Chernoff]
For any $\gamma,\delta \in(0, 0.5]$, let $n \geq \log(2/\delta)/2\gamma^2$, and let $x_1, x_2, \ldots, x_n$ be i.i.d. variables taking value in $(0, 1]$, with mean $\expect[x]$, then 
		\begin{align*}
\Pr\left[\left|\frac{1}{n}\sum_{i=1}^{n}x_i - \expect[x] \right| <\gamma \right] \geq 1-\delta
		\end{align*}
\end{lemma}
\begin{lemma}[Chebyshev]
	Let $X$ be a random variable with $\expect[X^2] < \infty$. For any $t >  0$, we have $\Pr[|\expect[X]-X| \leq t]\leq  \variance[X]/t^2$
\end{lemma}

\subsection{Lower Bound}\label{sec:lboundforsvs}
To complement the upper bound shown in the main paper, we show that the best-known lower bound 
for the problem in Theorem~\ref{thm:scondvsscond}, is $\Omega(n/\log(n))$. This bound is from~\citet{CDKS20}.

 \begin{theorem}[Theorem 11 in \citep{CDKS20}]
 	 An absolute constant $\varepsilon_0 < 1$ exists, such as the following holds.
 	Any algorithm that, given a parameter $\varepsilon \in (0,\varepsilon_0] $, and sample access to product distributions $\mathcal{P},\mathcal{Q}$ over $\{0,1\}^n$, distinguishes between $d_{TV}(\mathcal{P}, \mathcal{Q})  < \varepsilon$ and $d_{TV}(\mathcal{P}, \mathcal{Q}) > 2\varepsilon$, with probability at least 2/3, requires $\Omega(n/\log(n))$ samples.  Moreover, the lower bound still holds in the case where $\mathcal{Q}$ is known, and provided as an explicit parameter.
 \end{theorem}
  The lower bound is shown for the case where the tester has access to samples from a product distribution $\mathcal{P}$ and $\mathcal{Q}$(over $\{0,1\}^n$). As observed by~\citet{BC18}, {\subcond} access is no stronger than {\samp} when it comes to product distributions. Thus we have the following lower bound: 
 \begin{cor}
 	Let $\mathcal{S}(\varepsilon_1,\varepsilon_2,\mathcal{P},\mathcal{Q})$ be any algorithm that has \subcond access to distribution $\mathcal{P}$,  and explicit knowledge of $\mathcal{Q}$ (defined over $\{0,1\}^n$), and  distinguishes between $d_{TV}(\mathcal{P},\mathcal{Q}) \leq \varepsilon_1$  and $d_{TV}(\mathcal{P}, \mathcal{Q}) > \varepsilon_2$ with probability $>2/3$. Then,  $\mathcal{S}$ makes $\Omega(n/\log(n))$ \subcond queries.
 \end{cor}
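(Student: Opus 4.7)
The plan is to obtain the corollary as a direct reduction to Theorem~11 of~\citet{CDKS20}. That theorem establishes an $\Omega(n/\log n)$ \samp lower bound for distinguishing $d_{TV}(\mathcal{P},\mathcal{Q}) < \varepsilon$ from $d_{TV}(\mathcal{P},\mathcal{Q}) > 2\varepsilon$ when $\mathcal{P},\mathcal{Q}$ are product distributions on $\{0,1\}^n$ and $\mathcal{Q}$ is known to the tester. Since this is a special case of the decision problem in the corollary (instantiate $\varepsilon_1=\varepsilon$ and $\varepsilon_2=2\varepsilon$ and let the instance family be product distributions), it suffices to show that any algorithm using $k$ \subcond queries on $\mathcal{P}$ can be transformed into one using $k$ \samp queries on $\mathcal{P}$, achieving the same accept/reject behavior with identical success probability on these hard instances.

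The crux is the \subcond-to-\samp simulation observed by~\citet{BC18}: when $\mathcal{P}$ is a product distribution, a single \subcond query with string $\rho \in \{0,1,*\}^n$ can be answered by drawing $\omega \sim \mathcal{P}$ (one \samp query) and returning the string $\omega'$ defined by $\omega'[i] = \rho[i]$ for every $i$ with $\rho[i] \neq *$ and $\omega'[i] = \omega[i]$ otherwise. Because $\mathcal{P}$ is product, the coordinates in the $*$ positions are independent of those in the non-$*$ positions, so the marginal of $\omega'$ on the $*$ positions is exactly the marginal of $\mathcal{P}_\rho$ on those positions, while the non-$*$ coordinates are fixed to $\rho$ by construction. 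Consequently $\omega'$ is distributed as $\mathcal{P}_\rho$ and faithfully answers the \subcond query.

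Stringing these two ingredients together, any algorithm $\mathcal{S}$ using $k$ \subcond queries on $\mathcal{P}$ with explicit $\mathcal{Q}$ yields an algorithm $\mathcal{S}'$ using $k$ \samp queries on the same $\mathcal{P}$ (with the same $\mathcal{Q}$) that solves the distinguishing problem with probability $> 2/3$. Theorem~11 then forces $k = \Omega(n/\log n)$. I do not anticipate any substantive obstacle: the simulation is distributionally exact, and because the CDKS20 hard instances are already product distributions, no additional case analysis, coupling, or union bounding is required.
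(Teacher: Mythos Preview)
Your proposal is correct and matches the paper's own justification essentially verbatim: the paper simply invokes Theorem~11 of~\citet{CDKS20} together with the observation (attributed to~\citet{BC18}) that \subcond access is no stronger than \samp for product distributions, and states the corollary without further argument. Your write-up fills in the details of the \subcond-to-\samp simulation cleanly, which is exactly the content of that observation.
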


\newpage

\subsection{Proof of Lemma~\ref{lem:evalvseval}}\label{sec:appendix:evalvseval}
\evalvseval*
\begin{proof}
	Recall that $p_\sigma \in (1\pm\eta) \mathcal{P}(\sigma)$ and $q_\sigma \in (1\pm\eta)\mathcal{Q}(\sigma)$  then, using the definition of $d_{TV}(\mathcal{P},\mathcal{Q})$,
	\begin{align}
		d_{TV}(\mathcal{P}, \mathcal{Q}) &= \underset{\sigma \in \bn}{\sum}\mathds{1}_{\mathcal{Q}(\sigma) > \mathcal{P}(\sigma)} \left(1-\frac{\mathcal{P}(\sigma)}{\mathcal{Q}(\sigma)}\right)\mathcal{Q}(\sigma) \nonumber\\
		&= \underset{\sigma \in \bn}{\sum}\mathds{1}_{q_\sigma > p_\sigma} \left(1-\frac{p_\sigma}{q_\sigma}\right)\mathcal{Q}(\sigma) \label{line:dtvexpand} \\
		&+\underbrace{\underset{\sigma \in \bn}{\sum} \left( \mathds{1}_{\mathcal{Q}(\sigma) > \mathcal{P}(\sigma)} \left(1-\frac{\mathcal{P}(\sigma)}{\mathcal{Q}(\sigma)}\right)\mathcal{Q}(\sigma) - \mathds{1}_{q_\sigma > p_\sigma} \left(1-\frac{p_\sigma}{q_\sigma}\right)\mathcal{Q}(\sigma)\right)}_A \nonumber
	\end{align}
	
	The first summand of (\ref{line:dtvexpand}) can be written as $\expect_{\sigma \sim \mathcal{Q}}\left[\mathds{1}_{q_\sigma > p_\sigma}\left(1-\frac{p_\sigma}{q_\sigma}\right)\right]$.

	To bound $|A|$, we will split the domain into three sets, $B_1 = \{x: \mathds{1}_{\mathcal{Q}(\sigma) > \mathcal{P}(\sigma)} = \mathds{1}_{q_\sigma > p_\sigma} \} $, $B_2 = \{x: \mathds{1}_{\mathcal{Q}(\sigma) > \mathcal{P}(\sigma)} > \mathds{1}_{q_\sigma > p_\sigma} \} $ and $B_3 = \{x: \mathds{1}_{\mathcal{Q}(\sigma) > \mathcal{P}(\sigma)} < \mathds{1}_{q_\sigma > p_\sigma} \} $. 
	\begin{align*}
		|A| &=  \left| \underset{\sigma \in \bn}{\sum}\left( \mathds{1}_{\mathcal{Q}(\sigma) > \mathcal{P}(\sigma)} \left(1-\frac{\mathcal{P}(\sigma)}{\mathcal{Q}(\sigma)}\right)\mathcal{Q}(\sigma) - \mathds{1}_{q_\sigma > p_\sigma} \left(1-\frac{p_\sigma}{q_\sigma}\right)\mathcal{Q}(\sigma)\right) \right| \\
		&\leq \underset{\sigma \in \bn}{\sum} \left| \left( \mathds{1}_{\mathcal{Q}(\sigma) > \mathcal{P}(\sigma)} \left(1-\frac{\mathcal{P}(\sigma)}{\mathcal{Q}(\sigma)}\right)\mathcal{Q}(\sigma) - \mathds{1}_{q_\sigma > p_\sigma} \left(1-\frac{p_\sigma}{q_\sigma}\right)\mathcal{Q}(\sigma)\right) \right|\\
		&	= \underset{\sigma \in B_1}{\sum} \mathds{1}_{\mathcal{Q}(\sigma) > \mathcal{P}(\sigma)} \left|\frac{\mathcal{P}(\sigma)}{\mathcal{Q}(\sigma)} - \frac{p_\sigma}{q_\sigma}\right|\mathcal{Q}(\sigma)	+ \underset{\sigma \in B_2}{\sum} \mathds{1}_{\mathcal{Q}(\sigma) > \mathcal{P}(\sigma)} \left(1-\frac{\mathcal{P}(\sigma)}{\mathcal{Q}(\sigma)}\right)\mathcal{Q}(\sigma) \\
		&+ \underset{\sigma \in B_3}{\sum} \mathds{1}_{q_\sigma > p_\sigma} \left(1-\frac{p_\sigma}{q_\sigma}\right)\mathcal{Q}(\sigma)
	\end{align*}
	
	For $\sigma \in B_1$, $\left|\frac{\mathcal{P}(\sigma)}{\mathcal{Q}(\sigma)} - \frac{p_\sigma}{q_\sigma}\right| \leq \frac{2\eta }{1-\eta}\frac{\mathcal{P}(\sigma)}{\mathcal{Q}(\sigma)}\leq \frac{2\eta }{1-\eta}$. 
	For $\sigma \in B_2$, $1 - \frac{\mathcal{P}(\sigma)}{\mathcal{Q}(\sigma)}\leq 1-\frac{1-\eta}{1+\eta} = \frac{2\eta}{1+\eta}$, and for $\sigma \in B_3$, $1 - \frac{\mathcal{P}(\sigma)}{\mathcal{Q}(\sigma)}\leq 1-\frac{1-\eta}{1+\eta} = \frac{2\eta}{1+\eta}$. Thus, $|A| \leq \sum_{\sigma \in B_1} \frac{2\eta}{1-\eta} \mathcal{Q}(\sigma)
	+ \sum_{\sigma \in B_2} \frac{2\eta}{1+\eta} \mathcal{Q}(\sigma) + \sum_{\sigma \in B_3} \frac{2\eta}{1+\eta} \mathcal{Q}(\sigma) \leq \frac{2\eta}{1+\eta}$. Plugging the bounds on $|A|$ back into (\ref{line:dtvexpand}), we get 
 \begin{align}
      \left| d_{TV}(\mathcal{P}, \mathcal{Q}) - \expect \left[\mathds{1}_{q_\sigma > p_\sigma}\left(1-\frac{p_\sigma}{q_\sigma}\right)\right] \right| \leq \frac{2\eta }{1-\eta} \label{line:dtvexpectation}
 \end{align}
 
And hence, $\expect \left[\mathds{1}_{q_\sigma > p_\sigma}\left(1-\frac{p_\sigma}{q_\sigma}\right)\right]-\frac{2\eta }{1-\eta}\leq d_{TV}(\mathcal{P},\mathcal{Q}) \leq \expect \left[\mathds{1}_{q_\sigma > p_\sigma}\left(1-\frac{p_\sigma}{q_\sigma}\right)\right]+\frac{2\eta }{1-\eta}$. 	
	The distance estimation algorithm draws $|S|$ samples to estimate $\expect \left[\mathds{1}_{q_\sigma > p_\sigma}\left(1-\frac{p_\sigma}{q_\sigma}\right)\right]$. We will use $\mathtt{est}$ to denote the empirical estimate of $\expect \left[\mathds{1}_{q_\sigma > p_\sigma}\left(1-\frac{p_\sigma}{q_\sigma}\right)\right]$. Since each  sample $\sigma$ is drawn independently, and  $\mathds{1}_{q_\sigma > p_\sigma}\left(1-\frac{p_\sigma}{q_\sigma}\right)$ is bounded in $[0,1]$, we can use the Hoeffding bound  as follows,
	\begin{align}
		\Pr\left[\left|\mathtt{est} - \expect \left[\mathds{1}_{q_\sigma > p_\sigma}\left(1-\frac{p_\sigma}{q_\sigma}\right)\right]\right|\geq \frac{\eta }{1-\eta} \right]
		&\leq 1-2\exp\left(-2|S|\left(\frac{\eta }{1-\eta}\right)^2\right) \label{line:hoeffding}
	\end{align}
	
	Plugging (\ref{line:dtvexpectation})  into   (\ref{line:hoeffding}), we complete the proof:
	\begin{align*}
	\Pr\left[\left|\mathtt{est} -d_{TV}(\mathcal{P},\mathcal{Q}) \right|\geq \frac{3\eta }{1-\eta} \right] = 
		\Pr\left[d_{TV}(\mathcal{P},\mathcal{Q}) \not \in \left( \mathtt{est} \pm \frac{3\eta }{1-\eta}\right) \right] 
&\leq 2\exp\left(-2|S|\left(\frac{\eta }{1-\eta}\right)^2\right)
	\end{align*}

\end{proof}
\newpage

\section{Proof of Lemma~\ref{lem:tame}}\label{sec:appendix:tame}
\tame*
\begin{proof}
Our proof  adapts the $\theta$-balancing trick, devised for product distributions in~\citet[Thm. 6]{CDKS20}. 
To simulate the $\condmar(\mathcal{D'},\sigma_{<\ell})$ query using $\subcond$ access to $\mathcal{D}$, we use the following process:
	$all i \geq \ell$, given the substring $\sigma_{< i}$, set  $\sigma_i = 0$ with probability $(1-2\theta)\mathcal{D}_{\sigma_{< i}}(0) + \theta$ and $\sigma_i = 1$ with probability $(1-2\theta)\mathcal{D}_{\sigma_{< i}}(1) + \theta$. To implement the above, with probability $1-2\theta$, draw  $\rho \sim \subcond(\mathcal{D}, \sigma_{< i})$ 	and return $\rho_i$, else with probability $2\theta$ draw a sample uniformly from $\ba$. 	
	
	 Observe that $all \ell \in [n]$, $c \in \{0,1\}$, and $\rho \in \{0,1\}^{\ell-1}$, we have  $ \mathcal{D'}_{\rho}(c) = (1-2\theta)\mathcal{D}_{\rho}(c) + \theta$. Since  $ \theta \leq \mathcal{D'}_{\rho}(c) \leq 1- \theta$,  we see that $\mathcal{D'}$ is indeed $\theta$-tamed.  To simulate $\samp(\mathcal{D'})$, we use the chain rule.
  
 Now we will show that $\mathcal{D'}$ is close to $\mathcal{D}$.
\begin{claim}
For distribution $\mathcal{D}$ and 	its $\theta$-tamed sibling $\mathcal{D'}$, we have $d_{TV}(\mathcal{D},\mathcal{D}') \leq \theta n$
\end{claim}
\begin{proof}
 Recall the definition of subcube $S_\rho = \{w \in \{0,1\}^n: w_{\le |\rho|} = \rho\}$. 
 For any set $S \subseteq \{0,1\}^n$, $\mathcal{D}(S)$ is the total probability of $S$ in $\mathcal{D}$. 
  For any distribution $\mathcal{D},$ string  $\rho$ (with $ 1 \le |\rho| \le n$) and $\omega \in \{0,1\}^{n-|\rho|}$, the distribution  $\mathcal{D}^{\rho}$ denotes the marginal distribution of $\subcond(\mathcal{D},{\rho})$ in the remaining dimensions,  i.e. for any $\omega \in \{0,1\}^{n-|\rho|}$,  $\mathcal{D}^{\rho}(\omega) =  \Pr_{w \sim \subcond(\mathcal{D},{\rho})}[w = \rho \omega]$.

Consider the induction hypothesis that $d_{TV}(\mathcal{D},\mathcal{D'}) \leq \theta i $ if $\mathcal{D}$ is supported on $\{0,1\}^i$. 
 To verify the hypothesis for $i=1$, wlog assume that $\mathcal{D}(0) \leq \mathcal{D}(1)$, then $d_{TV}(\mathcal{D},\mathcal{D'}) =  \mathcal{D}(1) -  \mathcal{D}'(1) =2\theta \mathcal{D}(1)-  \theta \leq \theta $.
	Assume the hypothesis holds for all $i \in [n-1]$. Now, we show the hypothesis is true for $i = n$. 
	
	Consider a distribution $\mathcal{D}$ over $\{0,1\}^n$ and its $\theta$-tamed sibling $\mathcal{D}'$, then:
	
	\begin{align*}
		&d_{TV}(\mathcal{D}, \mathcal{D}')  = \frac12 \sum_{\sigma \in \{0,1\}^{n}}|\mathcal{D}(\sigma) - \mathcal{D'}(\sigma)|
		=\frac12 \sum_{\rho \in \{0,1\}}\sum_{\omega \in \{0,1\}^{n-1}}|\mathcal{D}(\rho\omega) - \mathcal{D}'(\rho\omega)|
		\\
		& = \frac12\sum_{\rho \in \{0,1\}}\sum_{\omega \in \{0,1\}^{n-1}} |\mathcal{D}(S_{\rho})\mathcal{D}^{\rho}(\omega) - \mathcal{D}'(S_{\rho})\mathcal{D}'^{\rho}(\omega)|  \\
		&= \frac12 \sum_{\rho \in \{0,1\}}\sum_{\omega \in \{0,1\}^{n-1}} |\mathcal{D}(S_{\rho})\mathcal{D}^{\rho}(\omega) - \mathcal{D}(S_{\rho})\mathcal{D}'^{\rho}(\omega) +\mathcal{D}(S_{\rho})\mathcal{D}'^{\rho}(\omega) - \mathcal{D}'(S_{\rho})\mathcal{D}'^{\rho}(\omega)|\\
  	&\leq \frac12 \sum_{\rho \in \{0,1\}}\sum_{\omega \in \{0,1\}^{n-1}} |\mathcal{D}(S_{\rho})\mathcal{D}^{\rho}(\omega) - \mathcal{D}(S_{\rho})\mathcal{D}'^{\rho}(\omega)| +|\mathcal{D}(S_{\rho})\mathcal{D}'^{\rho}(\omega) - \mathcal{D}'(S_{\rho})\mathcal{D}'^{\rho}(\omega)|\\
  &=  \frac12\sum_{\rho \in \{0,1\}}\sum_{\omega \in \{0,1\}^{n-1}} \mathcal{D}(S_{\rho})|\mathcal{D}^{\rho}(\omega) - \mathcal{D}'^{\rho}(\omega)| +\mathcal{D}'_{\rho}(\omega)|\mathcal{D}'(S_{\rho}) - \mathcal{D}(S_{\rho})|\\
		&= \frac12\sum_{\rho \in \{0,1\}} \left(\mathcal{D}(S_{\rho})2d_{TV}(\mathcal{D}^{\rho},\mathcal{D}'^{\rho})\right)+\frac12 \sum_{\rho \in \{0,1\}}|\mathcal{D}'(S_{\rho}) - \mathcal{D}(S_{\rho})|\\
		&\leq  \sum_{\rho \in \{0,1\}} \left(\mathcal{D}(S_{\rho})\theta(n-1)\right) +\theta = \theta n
	\end{align*}
 We use $|a+b|\leq |a|+|b|$ in the first inequality. In the second, we use the induction hypothesis to bound the first summand, and for the second, we observe that for $c \in \{0,1\}$, $|\mathcal{D}'(c)-\mathcal{D}(c)| \leq \theta$.
\end{proof}\end{proof}

\newpage

\section{Proof of Claim~\ref{claim:badip}, Proposition~\ref{prop:negative-binomial} and Lemma~\ref{lem:t1bound}} \label{sec:appendix:d1t1bound}
\badip*
\begin{proof}
		For a fixed iteration $j$, applying  Lemma~\ref{lem:subtoaeval} we have $\Pr[p_j \in \left(1\pm \eta \right)\mathcal{P'}(\sigma)] \geq 5/8$. Since $\hat{p}$ is the median of independent observations $p_j \in [0,1]$, over  $j \in [m_{in}]$, we can use the Chernoff bound to derive the claimed bound,	$\Pr[\mathtt{Bad}_{i}^{\hat{p}}]  \leq 1/24m_{out}$. The proof for the claim $\Pr[\mathtt{Bad}_{i}^{\hat{q}}]  \leq 1/24m_{out}$ proceedes identically. 
\end{proof}

\negbin*
\begin{proof}
	Fix any $i \in [n]$. In  Alg.~\ref{alg:subtoaeval'}, the r.v $\alpha$ takes the value $\sigma_i$ with probability $ \mathcal{D}_{\sigma_{<i}}(\sigma_i)$.  Note that while the value of $x_i$ increments by one in every iteration of the loop (lines~\ref{line:whilestart}-\ref{line:whileend}), while the value of $f$ increases by one only when  $\alpha = \sigma_i$. Since the loop runs until the value of $f$ is $k$, the distribution of $x_i$ is $\nb(k, \mathcal{D}_{\sigma_{<i}}(\sigma_i))$.
\end{proof}

\tbound*
\begin{proof}
	The number of \condmar calls made by $\SubToAeval_1$ in the $i^{th}$ iteration is captured by $x_i$.
	Recall from Prop~\ref{prop:negative-binomial} that $x_i$ is drawn from $\nb(k,\mathcal{D'}_{\sigma_{<i}}(\sigma_i))$, and therefore we have,
	\begin{align*}
		\expect[x_i] = k/\mathcal{D'}_{\sigma_{<i}}(\sigma_i) = 4n\eta^{-2}(1+\eta)^2/\mathcal{D'}_{\sigma_{<i}}(\sigma_i) \qquad \text{(Using $k$ from Line~\ref{line:k} of $\SubToAeval_1$)} 
	\end{align*} 
	
	From the fact that the distribution is $\varepsilon/8n$-tamed, we know that $\mathcal{D'}_{\sigma_{<i}}(\sigma_i)\geq  \varepsilon/8n$. Hence we have  $\expect[x_i] \leq  32n^2\eta^{-2}(1+\eta)^2\varepsilon^{-1}$.  Since $t_1 = \sum_{i \in [n]}x_i$, we have that
	$\expect[t_1] = \expect[\sum_{i \in [n]} x_i] =  n\expect[x_i] \leq 32n^3\eta^{-2}(1+\eta)^2\varepsilon^{-1}$. Thus, 
	\begin{align*}
		\Pr[t_1 \geq  64n^3\eta^{-2}(1+\eta)^2\varepsilon^{-1}] =	\Pr\left[t_1  \geq  2\expect[t_1] \right] & \leq \Pr\left[\sum_{i\in [n]} x_i  \geq 2\expect\left[ \sum_{i \in [n]} x_i\right] \right] \\
		&\leq  \sum_{i\in[n]}	\Pr\left[ x_i \geq  2\expect[x_i]  \right] \\ 
		\text{(Prop.~\ref{prop:negbinconc})} \qquad	&\leq \sum_{i\in[n]} \exp(-2k(1-1/2)^2/2)	 = n\exp(-k/4)  \\
		\text{(Substituting $k$ and $\eta \leq 1/5, \varepsilon < 1$)} \qquad&\leq n\exp(-n\eta^{-2}(1+\eta)^2\varepsilon^{-1}) \leq  n\exp(-9n)\leq  1/24
	\end{align*}
	In the last inequality we used the fact that for $s>0$, $xe^{-sx} \leq 1/es$.
\end{proof}

\end{document}